\tikzstyle{block} = [rectangle, draw, text width=5em,text centered, minimum height=2em]
\theoremstyle{plain}
\newtheorem{thm}{Theorem}[section]
\newtheorem{lem}[thm]{Lemma}
\theoremstyle{definition}
\theoremstyle{remark}
\begin{document}

\setlength{\pdfpageheight}{\paperheight}
\setlength{\pdfpagewidth}{\paperwidth}






\title{Parallelization of Loops with Variable Distance Data Dependences}
\author{Archana Kale \thanks{archanak@cse.iitb.ac.in}\and Amitkumar Patil \thanks{amitkumar@cse..iitb.ac.in} \and Supratim Biswas \thanks{sb@cse.iitb.ac.in}
        \\ Department of Computer Science and Engineering  
        \\ Indian Institute of Technology, Bombay}
\date{}



\maketitle

\begin{abstract}
The extent of parallelization of a loop is largely determined by the dependences between its statements. While dependence free loops are fully parallelizable, those with loop carried dependences are not. Dependence distance is a measure of absolute difference between a pair of dependent iterations. Loops with constant distance data dependence($CD^{3}$), because of uniform distance between the dependent iterations, lead to easy partitioning of the iteration space and hence they have been successfully dealt with.   
\\
Parallelization of loops with variable distance data dependences($VD^{3}$) is a considerably difficult problem. It is our belief that partitioning the iteration space in such loops cannot be done without examining solutions of the corresponding Linear Diophantine Equations(LDEs). Focus of this work is to study $VD^{3}$ and examine the relation between dependent iterations. Our analysis based on parametric solution leads to a mathematical formulation capturing dependence between iterations. Our approach shows the existence of reasonable exploitable parallelism in $VD^{3}$ loops with multiple LDEs.

\end{abstract}

\textbf{Category D.3.4 Code Generation, Compilers, Optimization}

\textbf{General Terms Dependence Distance, Iteration Space, Partition, Parametric Solutions, Linear Diophantine Equation, Parallelism, Algorithms, Schedule}

\section{Introduction}
Loops with loop carried $CD^{3}$s having cycle free Data Dependence Graph(DDG) have been successfully dealt with in the literature\cite{allen}, however few researchers have considered loops having cycles in DDG\cite{poly}. Fewer efforts have considered loops with variable distance data dependence($VD^{3}$) and DDGs with or without cycles\cite{cycle}. Unlike the case of constant distance, the solutions of LDEs for $VD^{3}$ do not seem to have a regular structure among dependent iterations. This is probably the reason for not enough research reported in this area.
\\
Our approach builds on a two variable LDE for which parametric solutions are well known \cite{exact}. We have analyzed the parametric solutions and have developed a mathematical formulation that captures the structure among dependent iterations. The structure leads to partitioning of the iteration space where each component of the partition represents iterations which have to be executed sequentially and distinct components are parallelizable. Further, bounds on the number and size of components have been obtained. Examining the structure of a component, we show that all iterations of the component can be generated from a single iteration, called its seed. The representation of the partition is consequently reduced to a set of seeds. This set is used as a basis to generate the components dynamically in a demand driven manner which can lead to schedules.
\\
We have extended this approach to partition the iteration space for multiple LDEs by combining the components of partitions of the individual LDEs. The correctness of the non-trivial composition of partitions to form the resultant partition is proved. We have presented algorithms for all important formulations of our work. The applicability of our approach rests on its ability to extract exploitable parallelism from $VD^{3}$ loops with multiple LDEs. Experimental evaluation of effectiveness of our approach requires existence of $VD^{3}$ loops with large number of LDEs. Since we could not find such loops in the benchmark programs, we chose to create multiple LDEs with random coefficients for our experimentation. We present an algorithm to generate iteration schedule. Results show that loops with $VD^{3}$ offer reasonable parallelism which reduces as the number of LDEs increase. To increase the parallelism in case of large number of LDEs we have formed an alternate partition using a heuristic. This heuristic shows non trivial improvement in parallelism.

\section {Motivation}
The approach is motivated by the following example. Consider a loop having $CD^{3}$ over iteration space R=[-8,7].
\begin {verbatim}
for(i = -8 ; i < 8 ; i++)
{ S1: g[i] = .........;
  S2: ......... = g[i + 3];}
\end{verbatim}
The $CD^{3}$ is "+3" and leads to a 3 element partitioning of iteration space for parallelization as follows: \{ \{-8,-5,-2,1,4,7\}, \{-7,-4,-1,2,5\}, \{-6,-3,0,3,6\} \}. \\
Consider the following $VD^{3}$ loop.
\begin {verbatim}
for(i = -8 ; i < 8 ; i++)
{ S1: g[2i + 1] = .........;
  S2: ......... = g[3i + 6];}
\end{verbatim} 
The dependent iteration pairs of the loop are modelled by LDE: $2x-3y = 5$. The solutions of the LDE i.e. dependent iteration pairs over R=[-8,7] are: \{ (-8,-7) , (-5,-5) , (-2,-3) , (1,-1) , (4,1) , (7,3) \}. This leads to Partition, $P=\cup_{\forall i} p_{i}$, of R as shown in Figure \ref{Partition} where iterations are nodes and edges represent dependence between them.
\begin{figure}[!htb]
\begin{center}
\includegraphics[scale =.4]{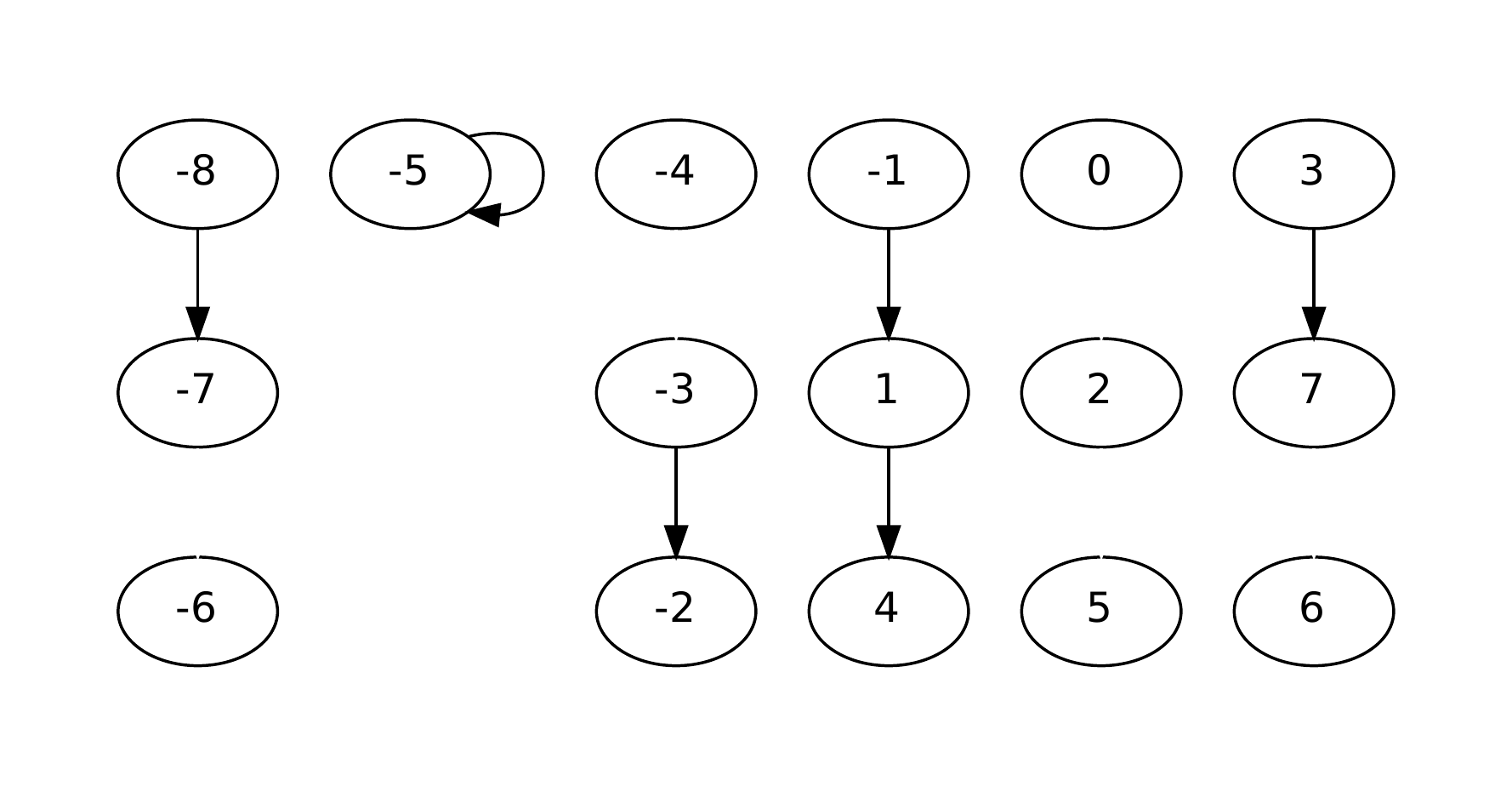}
\caption{Partition of [-8,7]}
\label{Partition}
\end{center}
\end{figure}
Each Connected Component(CC), $CC=p_{i}$, is a set of dependent iterations which should be executed sequentially. Different CCs can be executed parallely without synchronization.\\  
Partitioning the iteration space exposes all the parallelism naturally present in the loop. Here $|P| = 11$ is the maximum number of parallely executable CCs and the size of largest CC, max($|p_{i}|$) is 3. Among many possible semantically equivalent schedules a realizable 4-thread parallel schedule for the loop is:\\ 
\{\{-8,-7,-6,-5\},\{-3,-2,3,7\},\{-4,-1,1,4\},\{0,2,5,6\}\}.\\
This example though simple, brings out the possibility of Parallelizing a $VD^{3}$ loop.\\ 

\subsection{Background}
Our approach rests on using solutions of LDEs.\\
Without loss of generality, an LDE of the form $ax+by=c$, in this paper is assumed to be normalized where $a > 0$ and $a \le |b|$.\\
Given an Iteration space and a set of LDEs, we use the term \textbf{Precise dependence(PD)} to denote the collection of all solutions of all LDEs in the iteration space. A 2 variable LDE characterizes $VD^{3}$, except for the case $a=1$ and $b=-1$ where it characterizes $CD^{3}$.
Iterations which do not appear in PD are independent iterations.

\section{Precise Dependence for Single LDE}
A 2 variable LDE characterizes pair of 1 dimensional array access in a single non-nested loop. This section presents structure of P and efficient generation of PD for it. Consider the loop: 
\begin {verbatim}
for(i = LB ; i < UB ; i++)
{ S1 g[ai] = .........;
  S2 ......... = g[c - bi];}
\end{verbatim} 
Referring to figure \ref{Partition} and LDE $2x-3y = 5$, all solutions of the LDE can be derived using parametric solution \{-5 + 3t , -5 + 2t\}, where $-1 \le t \le 6$ is an integer\cite{exact}. If (x,y) is a solution of LDE and x precedes y in iteration order then x is called as source and y is called as sink. Using parametric solution it is possible to compute source of a given sink and sink of a given source.
Consider $source = 1$, $sink = -1$ is obtained for t = 2. Similarly using $sink = 1$, $source = 4$ is obtained. Hence (1,-1) and (4,1) are solutions of LDE. Since 1(source in one solution and sink in other) is an iteration common to both solutions \{-1,1,4\} is a CC. The as possible extensions \{8.5,4\} and \{-1,-2.33\} are not integer solutions so the size of $CC = 3$. Iteration 2 can be neither a sink nor a source, hence 2 is independent iteration.\\ \\
Observations based on aforementioned parametric solution of LDE $ax+by=c$ are:\\
\textbf{An Iteration which is a source as well as a sink extend a CC.}\\
\textbf{An Iteration which either source or sink but not both does not extend a CC.}\\
\textbf{An Iteration which neither source nor sink is and independent iteration forming singleton CC.}\\

\subsection{Formulation of CCs}
The objective of this section is to formulate the structure of a CC and relationship between its constituent iterations.
The solutions to LDE in parametric form $\{ (\alpha - mb ) , (\alpha + ma) \}$, where m is an integer and $\{ \alpha , \alpha\}$ is a particular solution form CCs of length 2.\\ \\
Longer CCs are formed if m is a multiple of a or b. As $(\alpha + mab)$ is a common iteration the CC is $\{ (\alpha - mbb ) , (\alpha + mab) , (\alpha -maa )\}$.\\ \\
The general structure of CCs for and LDE, where m is not a multiple of a or b is:\\
$\{ (\alpha \pm ma^{l}),(\alpha \mp ma^{l-1}b),(\alpha \pm ma^{l-2}b^{2}),.....,(\alpha \pm(-1)^{j}.ma^{i}b^{j}),.....,\\(\alpha \pm(-1)^{l-2}.ma^{2}b^{l-2}),(\alpha \pm(-1)^{l-1}.mab^{l-1}),(\alpha \pm(-1)^{l}. mb^{l})\}$\\ \\
\textbf{Seed is a representative iteration of a CC which can generate a CC.}
For a Single LDE single LOOP case all iterations are Seeds of corresponding CCs.If $\alpha$ is not an integer CC and Seeds are computed using particular solution $\{\beta , \gamma \}$ of LDE as: $\{ (\beta - mb ) , (\gamma + ma) \}$
\\ \\
\textbf{The Seeds for CCs can be computed using the aforementioned structure.}

\subsection{Bounds on $|P|$ and $|CC|$} 

This sub-section shows bounds on number $|P|$ and length $|CC|$ of CCs for LDE in a given range R = [L,U].\\ 

\begin{lem}
\label{arch_lemma_3}
Number of CCs of length $2$ is $R/|b|$.\\
\end{lem}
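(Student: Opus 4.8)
The plan is to characterize CCs of length exactly $2$ as those solution pairs $\{(\alpha-mb),(\alpha+ma)\}$ whose two natural one-step extensions fall outside the range $R=[L,U]$, and then count how many admissible values of the parameter $m$ this leaves. Recall from the parametric form that every solution of $ax+by=c$ can be written as $\{(\alpha-mb),(\alpha+ma)\}$ for an integer $m$, where $\{\alpha,\alpha\}$ is a particular solution; consecutive solutions in a CC correspond to consecutive admissible values of $m$. A length-$2$ CC is precisely one where neither endpoint can serve simultaneously as source and sink for a further solution, which by the structure lemma happens when $m$ is \emph{not} a multiple of $a$ or $b$. So the first step is to make this reduction explicit: length-$2$ CCs are in bijection with integers $m$ such that both endpoints $(\alpha-mb)$ and $(\alpha+ma)$ lie in $R$, while the extension candidates do not (equivalently, $m$ avoids the divisibility conditions that spawn longer chains).

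Next I would count the admissible $m$. The two endpoints $\alpha-mb$ and $\alpha+ma$ must both lie in $[L,U]$; since the LDE is normalized with $a\le|b|$, the binding constraint is the one involving $b$, giving roughly $|U-L|/|b|=R/|b|$ integer values of $m$ (writing $R$ for the length $U-L$ of the range, as the statement does). The second step is therefore to show that, up to the boundary/divisibility corrections, the number of such $m$ is $R/|b|$: the $b$-endpoint sweeps an interval of length $R$ as $m$ ranges over the integers, and spacing the admissible $m$ by $1$ yields $R/|b|$ of them after accounting for which ones are "used up" by longer CCs.

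The third step is the cross-check via double counting on total iteration count, which is the cleanest way to pin down the constant: the iterations appearing in length-$2$ CCs, plus those in longer CCs, plus the independent (singleton) iterations, must sum to the relevant count of solution-participating iterations over $R$; since longer CCs arise exactly from the $m$ that are multiples of $a$ or $b$ and these are comparatively sparse, the leading term is carried by the length-$2$ CCs, each contributing $2$ iterations, and this forces the count $R/|b|$.

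The main obstacle I anticipate is the boundary bookkeeping: "$R/|b|$" is stated as an exact equality, but in general the number of admissible $m$ is a floor/ceiling of $R/|b|$ shifted by how the endpoints $\alpha-mb$, $\alpha+ma$ align with $L$ and $U$, and by how many of those $m$ are divisibility-exceptional. I would handle this by either (i) interpreting $R$ as the count of lattice points and reading the claim up to an additive $O(1)$, or (ii) assuming, as the running example implicitly does, that the range is chosen so the parametric interval for $m$ is "full" (no partial slack at the ends). I expect the author's proof to quietly adopt one of these conventions; I would state it explicitly and then the count is immediate.
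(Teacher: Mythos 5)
There is a genuine mismatch between what you set out to count and what the paper counts, and it matters for the stated equality. You reduce the claim to counting CCs of length \emph{exactly} $2$, i.e.\ solution pairs $\{(\alpha-mb),(\alpha+ma)\}$ that cannot be extended (your condition that $m$ avoid the multiples of $a$ or $b$). Under that reading the count is \emph{not} $R/|b|$: every solution that sits inside a longer chain would have to be subtracted, which is exactly the correction your third ``double counting'' step waves away by calling those $m$ ``comparatively sparse.'' Sparse or not, they make the exact equality false for exact-length-$2$ counting, so the argument as outlined cannot close. The paper's convention is different and is made explicit in the very next lemma (``Number of CCs of length $l$ (including sub-CCs of length $<l$)''): a length-$2$ CC is simply \emph{any} solution pair of the LDE, whether or not it extends. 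With that convention the proof is one line, and it is the paper's proof: every solution is a CC of length $2$, and consecutive solutions are spaced $|b|$ apart in the iteration variable (parametric form $x=\alpha - mb$), so a range of size $R$ contains $R/|b|$ of them.

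Your secondary concern about boundary bookkeeping (floors, alignment of $\alpha-mb$ with $L$ and $U$, and the interplay of the $x$-constraint spaced by $|b|$ with the $y$-constraint spaced by $a$) is legitimate, and the paper silently treats $R/|b|$ as exact rather than as $\lfloor R/|b|\rfloor$ plus an $O(1)$ term; stating that convention explicitly, as you propose, would be an improvement. But the fix you need first is to drop the ``non-extendable'' restriction entirely and count all solutions, i.e.\ adopt the inclusive convention; otherwise no amount of boundary care will produce the claimed $R/|b|$.
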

\begin{proof}
Every solution of LDE is a CC of length 2. The number of solutions in the range R is $R/|b|$. Hence proved.
\end{proof}

Number of Singleton CCs is $R- 2R/|b|$. 

\begin{lem}
Number of CCs of length $l$ (including sub-CCs of length $<l) = R/(|b|^{l-1})$.\\
\end{lem}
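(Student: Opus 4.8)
The plan is to reduce the enumeration to counting the lattice points of a single one‑parameter family. First I would use the CC‑structure formulation established just above: a CC of length $l$ is a chain $x_1 \to x_2 \to \dots \to x_l$ in which every consecutive pair $(x_i,x_{i+1})$ is a solution of the LDE $ax+by=c$. Substituting $x_i = \alpha + v_i$ into $a x_i + b x_{i+1} = c$ and using that $(\alpha,\alpha)$ is a particular solution (so $a\alpha + b\alpha = c$) collapses every dependence to the single relation $a v_i + b v_{i+1} = 0$, i.e.\ $v_{i+1} = -(a/b)\,v_i$; the case $\alpha\notin\mathbb Z$ is handled verbatim with the integer particular solution $\{\beta,\gamma\}$ of the previous subsection in place of $\{\alpha,\alpha\}$.

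Next I would convert the recurrence into a divisibility condition. Since the LDE is normalized so that $\gcd(a,b)=1$, requiring all $l$ iterations $x_1,\dots,x_l$ to be integers forces successively $b\mid v_1$, then $b^2\mid v_1$, \dots, and finally $b^{\,l-1}\mid v_1$. Writing $v_1 = m\,b^{\,l-1}$ with $m\in\mathbb Z$ then gives $x_i = \alpha + (-1)^{i-1}\,m\,a^{\,i-1}b^{\,l-i}$ for $i=1,\dots,l$ --- exactly the general CC shape recorded earlier, but now visibly parametrized by the single integer $m$, with the ``$b$‑end'' of the chain being $x_1 = \alpha + m\,b^{\,l-1}$.

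The count is then immediate. The map $m \mapsto (\text{chain with parameter }m)$ is a bijection from $\mathbb Z$ onto the set of all length‑$l$ chains, and its anchor $x_1(m) = \alpha + m\,b^{\,l-1}$ runs through the arithmetic progression $\alpha + b^{\,l-1}\mathbb Z$, whose step is $|b|^{\,l-1}$. Hence the number of length‑$l$ chains anchored in $R=[L,U]$ is $\lfloor R/|b|^{\,l-1}\rfloor$, i.e.\ $R/|b|^{\,l-1}$ up to a boundary term; this recovers Lemma~\ref{arch_lemma_3} at $l=2$. This enumeration counts a length‑$l$ chain irrespective of whether it is maximal, so it includes the length‑$l$ chains sitting inside strictly longer CCs --- which is precisely what the parenthetical ``including sub-CCs'' records.

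I expect the only genuine content to be the second step: establishing $b^{\,l-1}\mid v_1$ and, just as importantly, that every length‑$l$ chain arises from this parametrization while distinct $m$ give distinct chains; after that the enumeration is a one‑line arithmetic‑progression count. The minor loose ends, handled exactly as in Lemma~\ref{arch_lemma_3}, are the boundary effect (a chain whose anchor lies in $R$ need not have all of $x_2,\dots,x_l$ in $R$, and the degenerate $m=0$ yields the self‑loop at $\alpha$ rather than a genuine $l$‑chain), which is absorbed into the rounding; and the tacit hypothesis $|b|\ge 2$, without which the recurrence produces only two distinct values and no CC of length $\ge 3$ exists --- i.e.\ the genuine $VD^{3}$ regime.
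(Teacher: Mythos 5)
Your argument is correct (at the paper's own level of precision) but it is packaged differently from the paper's proof. The paper proceeds by induction on chain length: the base case is Lemma~\ref{arch_lemma_3}, and the inductive step observes that an end element $\alpha \pm m a^{k}$ or $\alpha \pm m b^{k}$ of a length-$k$ CC extends precisely when the parameter $m$ acquires one more factor of $a$ (resp.\ $b$), so the extended end elements lie on an arithmetic progression of step $|b|^{k}$ and number $R/|b|^{k}$ in the range. You instead derive the full closed form in one pass: orienting the chain so that consecutive elements satisfy $ax_i + bx_{i+1} = c$, the shift $v_{i+1} = -(a/b)v_i$ together with $\gcd(a,b)=1$ forces $b^{\,l-1} \mid v_1$, giving the parametrization $x_i = \alpha + (-1)^{i-1} m a^{i-1} b^{\,l-i}$ and reducing the count to the density of the anchor progression $\alpha + b^{\,l-1}\mathbb{Z}$ in $R$. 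The underlying counting fact is the same in both proofs; what your direct route buys is an explicit bijection (every length-$l$ chain arises from a unique $m$, anchored at its $b$-end), an explicit statement of where $\gcd(a,b)=1$ and $|b|\ge 2$ are used, and an honest acknowledgment of the boundary/rounding slack that the paper leaves implicit, whereas the paper's induction leans on the previously displayed CC structure and Lemma~\ref{arch_lemma_3} without re-deriving it. Your reading of the parenthetical (non-maximal length-$l$ chains inside longer CCs are counted) is also the one consistent with the paper's subsequent alternating-sum bound for $N_{max}$.
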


\begin{proof}
Proof by Induction: Number of CCs of length $2$ is $R/|b|$ using lemma \ref{arch_lemma_3}.\\
Assumption: Number of CCs of length $k$ is $R/(|b|^{k-1})$.\\
To Prove That: Number of CCs of length $k+1$ is $R/(|b|^{k})$.\\
An end element of a CC of length k has a form $(\alpha \pm ma^{k}) or (\alpha \pm mb^{k})$.\\ if $m= ka$ is a multiple of a the element $(\alpha \mp kb^{k+1})$ adds to the CC.\\
The maximum number of elements having pattern $(\alpha \mp kb^{k+1}) \in$ [LB,UB] is $R/(|b|^{k})$. Hence Proved.\\ 
\end{proof}

Let $p_{max}$ be a positive integer such that $R/(|b|^{P_{max+1}}) < 1 \le R/(|b|^{P_{max}})$ then the upper bound on length of CCs is $ L_{max} = P_{max} + 1$.\\
\\
The upper bound on Number of non singleton CCs is:
$N_{max} = R/|b| - R/(b^{2}) + R/(|b|^{3}) - R/(b^{4})...R/(|b|^{P_{max}})$\\
$= R\Big(|b|^{P_{max}}-(-1)^{P_{max}})/(|b|^{P_{max}}.(|b|+1)\Big)$\\
\\
This gives a closed-form expression for the bounds, a useful input for scheduling.

\section{Classification of loops for parallelization}
In multidimensional arrays, subscripts are separable if indices do not occur in other subscripts. A loop is separable 
if subscripts of all arrays in it (excluding arrays which are only read but not written) are separable.
We construct and prove precise dependence theory for simple loops first and then extend it to all separable loops.
Precise dependence information is obtained 
by solving LDE’s and obtaining the dependence graph for the given range of the loop.
A node from
each connected component in undirected dependence graph is selected and stored for
generation of the CC and scheduling purpose (here onwards referred as ’seed’).

Even though we present a general method for parallelizing and scheduling any type of separable loop, 
we demonstrate that some special cases have interesting properties which can be exploited for efficiency and effectiveness.


\subsection{Single Loop, Multiple LDEs}

Consider a loop $L_1$ having \emph{n} pairs of statements such that each pair has exactly one common array access \emph{(if a 
statement has more than one common array access, it can be broken into multiple statements having exactly one common 
array access)}, construct a LDE per pair
(construction is shown in previous section). Let the LDE's be LDE-1,LDE-2,LDE-3,
...LDE-n for $1^{st},2^{nd},3^{rd},...,n^{th}$ pair of statements respectively.

Let $S_1,S_2,S_3,...,S_n$ be the solutions of the LDE-1,LDE-2,LDE-3,...,LDE-n respectively.

\begin{equation}
S = S_1 \cup S_2 \cup S_3 \cup \ ...\  \cup S_n
\end{equation}

\begin{lem}
\label{lemma_1}
All the dependent iterations in $L_1$ are elements of S. (i.e if iterations $i_1$ and $i_2$ are dependent
then $(i_1,i_2) \in S$ or $(i_2,i_1) \in S$.
\end{lem}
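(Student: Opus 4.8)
The plan is to prove the statement by reduction to the dependence model developed in the previous section, where each pair of array-referencing statements contributes exactly one LDE. First I would fix precisely what "dependent iterations" means in the loop $L_1$: two iterations $i_1$ and $i_2$ are dependent if one of them writes an array location that the other reads or writes, and this must happen through one of the $n$ common array accesses. Since every pair of statements has been normalized to contain exactly one common array access, any such conflict between iterations $i_1$ and $i_2$ arises from one specific pair, say the $k$-th pair, and hence corresponds to an equality between the two subscript expressions for that array in iterations $i_1$ and $i_2$.

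The key step is then to observe that, for the $k$-th pair, the construction of LDE-$k$ shown in the previous section was designed exactly so that its integer solutions in the loop range coincide with the pairs of iterations that access the same element through that pair's common access. Concretely, if the write subscript is $a\,i$ and the read subscript is $c - b\,i$, then iterations $x$ and $y$ collide precisely when $a x = c - b y$, i.e. $(x,y)$ satisfies $ax + by = c$, which is LDE-$k$. Therefore $(i_1,i_2)$ or $(i_2,i_1)$ — depending on iteration order, i.e. which of the two is the source and which the sink — is an element of $S_k$. Since $S_k \subseteq S$ by the definition $S = S_1 \cup S_2 \cup \dots \cup S_n$, we conclude $(i_1,i_2) \in S$ or $(i_2,i_1) \in S$, which is what we want. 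I would phrase this as: pick any dependent pair, identify the responsible statement pair $k$, invoke the subscript-equality characterization to place the ordered pair in $S_k$, then lift to $S$ by the union.

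The main obstacle I anticipate is not the algebra — that is routine given the LDE construction — but making the case analysis airtight. One has to handle all flavors of dependence (flow, anti, output) uniformly, argue that self-dependence within a single iteration is irrelevant (it never creates a cross-iteration ordering constraint and so need not appear in $S$), and be careful about the orientation convention: the excerpt's definition of $S_k$ consists of ordered source/sink pairs, so when $i_1$ and $i_2$ are dependent we only know the unordered pair collides, and we must appeal to iteration order to decide which ordered pair lands in $S_k$. I would also want to note explicitly that the restriction of attention to arrays that are written (the read-only arrays excluded in the separability discussion) is exactly what lets us ignore read–read "conflicts," which carry no dependence. Once these bookkeeping points are settled, the proof is essentially a one-line consequence of the LDE construction together with the definition of $S$.
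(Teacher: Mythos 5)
Your proposal is correct and follows essentially the same argument as the paper: a dependent pair must collide through some statement pair $k$, the subscript equality makes $(i_1,i_2)$ or $(i_2,i_1)$ a solution of LDE-$k$, hence an element of $S_k \subseteq S$. The paper merely wraps this same direct reasoning in a (superfluous) proof-by-contradiction shell, while you add some extra bookkeeping about dependence types and orientation that the paper glosses over.
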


\begin{proof}
Proof by contradiction.
Assume there is a pair of iteration $x_1$ and $x_2$ which are dependent but $(x_1,x_2) \notin S$ and
$(x_2,x_1) \notin S$.
$x_1$ and $x_2$ access same memory location of atleast one array (say array \emph{A}) and for some statements
$s_1$ and $s_2$. This implies $(x_1,x_2)$ or $(x_2,x_1)$ is a solution of LDE that represents $s_1$ and $s_2$.
Hence $(i_1,i_2) \in S$ or $(i_2,i_1) \in S$. 
\end{proof}

\begin{lem}
\label{lemma_2}
All elements in \emph{S} represent dependent iterations in $L_1$.
(i.e if $(i_1,i_2) \in S$ or $(i_2,i_1) \in S$ then iterations $i_1$ and $i_2$ are dependent.
\end{lem}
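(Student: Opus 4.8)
The plan is to argue directly from the construction of the LDEs, essentially reversing the reasoning used in Lemma \ref{lemma_1}. Let $(i_1,i_2)\in S$ (the case $(i_2,i_1)\in S$ is symmetric). Since $S=\bigcup_{j=1}^{n} S_j$, there is some index $k$ with $(i_1,i_2)\in S_k$, i.e.\ $(i_1,i_2)$ is a solution of LDE-$k$. First I would recall how LDE-$k$ was built: it arises from the $k$-th pair of statements, say $s_1$ and $s_2$, which share exactly one common array, say $A$, whose subscripts are affine in the loop index --- of the form $A[a\cdot x]$ in $s_1$ and $A[c-b\cdot x]$ in $s_2$ after normalization. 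The equation LDE-$k$ is precisely $a\,x+b\,y=c$, whose integer solutions $(x,y)$ lying in the loop range $R$ are exactly the pairs of iterations for which the $A$-access of $s_1$ in iteration $x$ and the $A$-access of $s_2$ in iteration $y$ refer to the same element of $A$.

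Next I would translate ``same memory cell accessed in two iterations'' into ``dependent iterations''. Because $s_1,s_2$ were retained as a dependence-relevant pair, at least one of the two accesses to $A$ is a write (read-only arrays having been excluded from consideration); hence the two accesses form a true (flow), anti, or output dependence according to the read/write status of $s_1$ and $s_2$. The iteration order within $L_1$ then fixes which of $i_1,i_2$ is the source and which the sink, but in either case iterations $i_1$ and $i_2$ are dependent, which is what must be shown. Combined with Lemma \ref{lemma_1}, this yields that $S$ is exactly the set of dependent iteration pairs, i.e.\ $S$ is the precise dependence.

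The main obstacle I anticipate is not the logical core, which is a short unwinding of definitions, but making the correspondence between LDE solutions and memory aliasing fully rigorous: one must check that the solution lies within the iteration bounds $R$ so that the aliasing actually occurs during execution rather than merely formally, that the normalization $a>0,\ a\le|b|$ does not alter the solution set, and that the case analysis on which subscript carries the write is exhaustive. A secondary subtlety is the degenerate case $a=1$, $b=-1$ (the $CD^{3}$ case noted in the Background): ``dependent'' should be read in the same sense there, and handling it uniformly with the generic $VD^{3}$ case requires only that we invoke the definition of a solution of the LDE rather than any $VD^{3}$-specific structure.
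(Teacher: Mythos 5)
Your argument is correct and is essentially the paper's own proof: membership in $S$ puts $(i_1,i_2)$ in some $S_k$, i.e.\ it is a solution of LDE-$k$, and by the construction of that LDE the two iterations access the same element of the common array (with one access a write), hence they are dependent. The paper merely wraps this same unwinding in a contradiction frame and omits the normalization/range/write-status details you spell out, so the difference is cosmetic rather than substantive.
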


\begin{proof}
Proof by contradiction.
Assume there is an element $(x_1,x_2) \in S$ but $x_1$ and $x_2$ are not dependent.
If $(x_1,x_2) \in S$ then $(x_1,x_2) \in S_k$ for some $k^{th}$ pair of statements ($1 \le k \le n$)
and this implies $x_1$ and $x_2$ are dependent. 
\end{proof}

\begin{thm}
\label{PD_thm_single_loop}
 Precise dependency information for $L_1$ captures all parallelism present at iteration level granularity.
\end{thm}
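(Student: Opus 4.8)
The plan is to show that the set $S$, together with the partition it induces, is a complete and faithful representation of the iteration-level dependence structure of $L_1$, so that any valid parallel schedule of $L_1$ can be obtained by scheduling the connected components of the dependence graph derived from $S$. First I would make precise what ``captures all parallelism present at iteration level granularity'' means: a dependence-respecting partition of the iteration space into groups that may be executed in parallel is \emph{optimal} if (i) no valid schedule can place two genuinely dependent iterations in different parallel groups, and (ii) every pair of iterations not forced together by a chain of dependences is allowed to be separated. Thus the theorem amounts to: the connected components of the undirected graph $G=(R,S)$ are exactly the equivalence classes of the ``must-execute-together'' relation (the transitive closure of the dependence relation).

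The key steps, in order, would be: (1) Invoke Lemma \ref{lemma_1} to conclude that $G$ contains an edge for \emph{every} dependent pair of iterations, so $G$ has no fewer edges than the true dependence graph; hence the connected components of $G$ are no finer than the true ``must-execute-together'' classes, i.e.\ $G$ never wrongly separates dependent iterations. (2) Invoke Lemma \ref{lemma_2} to conclude that $G$ contains \emph{no spurious} edges: every edge of $G$ is a real dependence. Therefore $G$ is \emph{exactly} the undirected dependence graph of $L_1$, and its connected components coincide with the transitive closure of the dependence relation. (3) Argue the ``completeness'' direction: since two iterations in distinct components of $G$ are connected by no chain of real dependences, there is no correctness obstruction to executing them concurrently; formally, any topological order within each component, run in parallel across components, is a semantically equivalent schedule (this is the standard sequential-consistency argument for dependence graphs, and the motivating example exhibits such a schedule). (4) Argue the ``soundness/optimality'' direction: if two iterations lie in the same component, some chain $i=j_0,j_1,\dots,j_k=i'$ of pairwise real dependences links them, and a transitivity argument on the program order of the dependent accesses shows every valid schedule must respect the induced order along the chain, so they cannot be freed into independent parallel groups. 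Combining (3) and (4), the component partition extracts precisely the available parallelism, with nothing lost and nothing unsound.

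The main obstacle I anticipate is step (4) — formalizing that a chain of individual pairwise dependences actually forces a scheduling constraint between its endpoints. Pairwise data dependences are directed (source before sink), but along a connected component the edge orientations need not compose into a single consistent order (an iteration can be a sink on one edge and a source on another, as the ``common iteration'' discussion before Section~3.1 already notes). One must therefore argue at the granularity of the individual array accesses: each edge corresponds to two statement instances touching the same memory cell, and the conflict (at least one being a write) forces their relative order; chaining these access-level constraints, one shows the endpoints of the chain are transitively ordered whenever a genuine data hazard would be violated by reordering. I would handle this by reducing to the access level rather than the iteration level, then noting the loop body is finite so the number of accesses per iteration is bounded, which keeps the chaining argument finite and well-founded. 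The remaining steps are essentially bookkeeping on top of Lemmas \ref{lemma_1} and \ref{lemma_2}, which do the real work of establishing that $S$ is neither too small nor too large.
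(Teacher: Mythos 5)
Your proposal rests on exactly the same foundation as the paper: the paper's entire proof of Theorem~\ref{PD_thm_single_loop} is the single line ``From lemma \ref{lemma_1} and \ref{lemma_2}'', i.e.\ $S$ misses no dependent pair and contains no spurious pair, so in that sense you have reproduced the intended argument. Where you differ is that you go on to formalize what the paper leaves implicit: you define ``captures all parallelism'' as the connected components of $G=(R,S)$ coinciding with the must-execute-together classes, and you argue both directions (components may run concurrently; iterations in one component may not be split). That elaboration is valuable, but be careful with your step (4) as first stated: it is \emph{not} true that a chain of dependences with mixed edge orientations forces an execution order between its endpoints --- two iterations joined only by an undirected path (say both are sources feeding a common sink) need not be ordered relative to each other in a valid schedule. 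The correct claim, which your own ``obstacle'' paragraph is groping toward, is weaker: such iterations cannot be assigned to distinct \emph{synchronization-free} groups, because the common iteration on the path would then have an ordering constraint spanning two unsynchronized threads. If you restate (4) in terms of synchronization-free thread assignment rather than schedule order, your argument is sound and in fact supplies the semantic justification that the paper's one-line proof (and its informal claim that distinct CCs ``can be executed parallely without synchronization'') simply asserts.
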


\begin{proof}
 From lemma \ref{lemma_1} and \ref{lemma_2}. 
\end{proof}

When dependencies come from two or more LDE's, solutions combine to produce DAG's which represent the dependence graph of loop.
Seeds can be classified as, \textbf{common seeds:} seeds that occur in solutions of more than one LDE and 
\textbf{unique seeds:} seeds that occur in solution of exactly one LDE.

Dependence graph can be partitioned into connected components having atleast one node which is common seed for any of 
the $ \binom{n}{k}$  sets of \emph{k} LDE's each (where $1\le k\le n$).

	Common seeds of \emph{k} lde's can be obtained by intersection of common seeds of
	$\frac{k}{2}$ disjoint pair of lde's.

\subsection{Formulation of Common Seeds}
\label{formulation_of_common_seeds}
Common Seeds are iterations which are common to solutions of 2 or more LDEs. 
Consider solutions of 2 LDEs: $(source_{1},sink_{1})$ and $(source_{2},sink_{2})$. Common Seeds are formed when : $source_{1} = source_{2}$ or $source_{1} = sink_{2}$ or $sink_{1} = source_{2}$ or $sink_{1}= sink_{2}$\\ 
\\
Common Seeds are intersection of solutions of multiple LDEs.
\\
For n LDEs there are $^{n}C_{k}2^{k}$ ($1\le k \le n$) ways of forming common Seeds.

\subsection{Generating unique common seeds}
   A set of seeds is known as unique common seeds if all seeds in the set are common seeds
   and no two seeds are from the same connected component.

   Consider a loop in which dependent iterations are captured by $n$ LDE's, let CS be the set of common seeds 
   (obtained using method described 
   in \ref{formulation_of_common_seeds}), UCS be the set of unique common seeds, 
   $\forall i \in [n]$ $Seed_i$, $UL_i$ represent the seeds and unique seeds of $i^{th}$ LDE respectively.

	\begin{algorithm}
		  \label{unique_common_seeds_algorithm}
		  \caption{Algorithm for generating unique common seeds and unique seeds}
		  
	\begin{algorithmic}		  
		\State $ UCS = \O $   
		\For{i = 1 to n}
		    \State $UL_i = Seed_i$
		\EndFor
		  
		  \While{ $ CS \neq \phi $ }
		      \State pick an element \emph{e} from CS
		      \State $CS = CS - \{e\}$
		      \State $UCS = UCS \cup \{e\}$
		      \State $Part = schedule(e,\text{parametric $sol^n$ }$
		      \State  of LDE's) \footnote{discussed in \ref{schedule_algorithm}}
		      \State $CS = CS - Part$
		      
		      \For{i = 1 to n}
			\State $UL_i = UL_i - Part$
		      \EndFor
		  \EndWhile
		  \end{algorithmic}

		\end{algorithm}
   
\subsection{Nested loops with multiple lde's accessing only one common array}
We consider nested loops having no conditional statements and the body of loop is located inside the 
inner most loop. All the array access expressions are affine and are separable.

Consider a loop $L_2$ having \emph{l} number of nested loops, a common array $A$ of $n$ dimension, $m$ be the number of pair of 
array accesses for each array such that atleast one of the accesses in the pair is write.

Let $\forall a \in [l]$, $i_a$ be used in access expressions of $C_a$ number of subscripts and $D_a$ be a set 
which contains all the subscripts whose access expressions involve $i_a$.
$D_a=\{d_{a,1},d_{a,2},...,d_{a,C_a}\}$, where $\forall i \in [C_a], d_{a,i} \in [n]$. 

$[n]$ subscripts are partitioned such that.

\begin{eqnarray}
 \label{eqn_1}
 \forall i,j \in [l] \ \&\ i \neq j, D_i \cap D_j = \phi 
 \nonumber \\
 \bigcup_{\forall i \in [l]} D_i = [n]
\end{eqnarray}

Let $\forall u \in [m]$ and $\forall v \in [n]$, 
$S_{u,v}$ represent the solutions of the LDE formed from $u^{th}$ pair of array accesses on $v^{th}$ subscript.

Let
\begin{equation}
\label{eqn_2}
\forall u \in [m], \forall a \in [l], S_{u,a}^{1} =   
\begin{cases}
\bigcap_{\forall i \in D_a} S_{u,i} & \text{if }D_a \neq \phi
\\
R_a & \text{otherwise } 
\end{cases}
\end{equation}
where $R_a$ is the range of $a^{th}$ loop.

\begin{equation}
\label{eqn_3}
 Let\  \forall u \in [m], T_u = \prod_{\forall a \in [l]} S_{u,a}^{1}
\end{equation}

\begin{equation}
\label{eqn_4}
 Let\ T = \bigcup_{u=1}^{m} T_{u}
\end{equation}

\begin{lem}
 \label{lemma_10}
 T captures all the dependent pair of iterations in $L_2$. If iterations $(x_1,x_2,...,x_l)$ and $(y_1,y_2,...,y_l)$
 are dependent then
 \begin{eqnarray}
 \label{lemma_10_1}
  \prod_{\forall a \in [l]} \binom{(x_a,y_a) \ if\  D_i \notin \phi }{(e1,e2) | e1,e2 \in R_i  
  \ otherwise} \in T \nonumber \\
  OR \nonumber \\
  \prod_{\forall a \in [l]} \binom{(y_a,x_a) \ if\  D_i \notin \phi }{(e1,e2) | e1,e2 \in R_i  
  \ otherwise} \in T 
 \end{eqnarray}
 where $R_i$ is the range of $i^{th}$ loop
\end{lem}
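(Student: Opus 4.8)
The plan is to lift the single-loop result (Theorem \ref{PD_thm_single_loop}, itself resting on Lemmas \ref{lemma_1} and \ref{lemma_2}) to the multidimensional separable setting by a dimension-by-dimension (equivalently, loop-index-by-loop-index) argument, and then glue the per-index facts together using the product construction in \eqref{eqn_2}, \eqref{eqn_3} and \eqref{eqn_4}. I would first fix notation: write an iteration as $\bar{x}=(x_1,\dots,x_l)$, and recall that separability means each subscript $v\in[n]$ of the common array $A$ is an affine function of exactly one loop index, namely the unique index $a$ with $v\in D_a$ (this is exactly what the partition property \eqref{eqn_1} encodes). So a dependence between $\bar{x}$ and $\bar{y}$ caused by the $u^{\text{th}}$ pair of accesses means those two accesses hit the same cell of $A$, i.e. they agree coordinate-wise on every subscript $v\in[n]$.

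Next I would prove the two inclusions separately, mirroring Lemmas \ref{lemma_1} and \ref{lemma_2}. For the ``$T$ contains every dependent pair'' direction: suppose $\bar{x}$ and $\bar{y}$ are dependent via pair $u$. Agreement on subscript $v$, where $v\in D_a$, forces $(x_a,y_a)$ (or the swap) to be a solution of the LDE built from the $u^{\text{th}}$ pair of accesses on subscript $v$ — this is the single-variable LDE step, justified as in Lemma \ref{lemma_1}. Since this holds for \emph{every} $v\in D_a$, the pair $(x_a,y_a)$ lies in $\bigcap_{v\in D_a}S_{u,v}=S_{u,a}^{1}$; for indices $a$ with $D_a=\phi$ the access expressions do not constrain $i_a$ at all, so any pair from $R_a\times R_a$ is admissible, which is precisely the second branch of \eqref{eqn_2}. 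Taking the product over $a\in[l]$ lands the tuple in $T_u\subseteq T$, with the orientation (unswapped vs.\ swapped) consistent across all coordinates because it is determined globally by iteration order. For the converse direction (analogue of Lemma \ref{lemma_2}): an element of $T$ lies in some $T_u$; reading off each coordinate shows $(x_a,y_a)\in S_{u,a}^{1}$, hence $(x_a,y_a)$ solves the LDE on each subscript $v\in D_a$, hence the $u^{\text{th}}$ pair of accesses collides on all of $A$'s subscripts, hence $\bar{x},\bar{y}$ are genuinely dependent.

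The main obstacle I anticipate is \textbf{bookkeeping the orientation consistently across dimensions} — i.e. making rigorous the claim that the binomial-style tuple in \eqref{lemma_10_1} uses a single global choice of source/sink rather than choosing independently per coordinate. The clean way to handle this is to note that iteration order on the nest is lexicographic, so ``$\bar{x}$ precedes $\bar{y}$'' is a single well-defined relation, and the LDE for each subscript is set up (as in the single-loop construction of the previous section, where source/sink are read off from the parametric solution) so that its ordered solution $(\text{source},\text{sink})$ respects that same order. A secondary subtlety is the $D_a=\phi$ case: one must check that replacing the (vacuous) intersection by $R_a$ neither drops real dependences (it cannot, since $i_a$ is unconstrained by the accesses) nor manufactures spurious ones at the level of array collision (it does not, because collision on $A$ is decided entirely by the subscripts, none of which mention $i_a$) — the only effect is that $T$ may pair iterations that differ in the $a^{\text{th}}$ component, which is correct since such iterations really are dependent through the $u^{\text{th}}$ pair. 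Once these two points are nailed down, the lemma follows by combining the two inclusions, exactly as Theorem \ref{PD_thm_single_loop} followed from Lemmas \ref{lemma_1} and \ref{lemma_2}.
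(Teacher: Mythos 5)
Your proposal takes essentially the same route as the paper's proof: the paper (phrased as a contradiction) likewise argues that a dependence forces $(x_a,y_a)$ or its swap into $S_{u,v}$ for every subscript $v\in D_a$, hence into $S_{u,a}^{1}$ by equation \ref{eqn_2} (with the $D_a=\phi$ branch covered by $R_a$), and hence into $T_u\subseteq T$ by equations \ref{eqn_3} and \ref{eqn_4}, exactly as you do, and your orientation bookkeeping is a harmless refinement of a point the paper leaves implicit. Note only that the converse inclusion you also sketch is not part of this statement; it is Lemma \ref{lemma_11}.
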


\begin{proof}
 Proof by contradiction. Assume there are two iterations $(x_1,x_2,...,x_l)$ and $(y_1,y_2,...,y_l)$,
 which are dependent but don't satisfy equation \ref{lemma_10_1}. 
 
 Since the two iterations are dependent, atleast one pair of array accesses should access same memory location. 
 If the two iterations are accessing the same memory location, $\forall a \in [l], (x_a,y_a)$ has to
 simultaneously satisfy all the LDE's where $i_a$ is used in access expression.

 the two iterations access same memory location iff $(x_a,y_a)$ or $(y_a,x_a)$ satisfies all the lde's
 of subscripts in $D_a$ simultaneously.

\begin{eqnarray}
\label{lemma_10_2}
  \exists q \in [m], \forall a \in [l], (D_a \neq \phi) \implies 
  \nonumber \\
  \forall k \in D_a, (x_a,y_a) \in S_{q,k}
  \nonumber \\
  or
  \nonumber  \\
  \exists q \in [m], \forall a \in [l], (D_a \neq \phi) \implies 
  \nonumber \\
  \forall k \in D_a, (y_a,x_a) \in S_{q,k}
\end{eqnarray}

Equation \ref{lemma_10_2} and \ref{eqn_2} imply \ref{eqn_5}.

\begin{eqnarray}
  \label{eqn_5}
  \exists q \in [m], \forall a \in [l], (x_a,y_a) \in S_{q,a}^{1}
  \nonumber \\
\end{eqnarray}

Equation \ref{eqn_5}, \ref{eqn_3} and \ref{eqn_4} imply equation \ref{lemma_10_1} is true, but this
is a contradiction. Hence the proof.

\end{proof}

\begin{lem}
 \label{lemma_11}
 Every element of T represents atleast one pair of dependent iterations in $L_2$. If $ \prod_{i=1}^{l} \{(x_i,y_i)\} \in T $, 
 then iterations $(x_1,x_2,...,x_l)$ and $(y_1,y_2,...,y_l)$ are dependent.

\end{lem}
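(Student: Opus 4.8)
The plan is to prove the converse direction of Lemma \ref{lemma_10} by tracing an arbitrary tuple of solution pairs back through the definitions \ref{eqn_4}, \ref{eqn_3}, and \ref{eqn_2}, and then showing that the two iterations it encodes must collide on a memory location. First I would take $\prod_{i=1}^{l}\{(x_i,y_i)\} \in T$ and unwind \ref{eqn_4}: by definition of the union there exists some $u \in [m]$ with $\prod_{i=1}^{l}\{(x_i,y_i)\} \in T_u$. Then \ref{eqn_3} gives, componentwise, $(x_a,y_a) \in S_{u,a}^{1}$ for every $a \in [l]$. Now split on whether $D_a = \phi$: if it is empty, the $a$-th loop index does not appear in any subscript of the $u$-th access pair, so the value of $x_a$ versus $y_a$ is irrelevant to whether those two accesses alias on that dimension; if $D_a \neq \phi$, then by \ref{eqn_2} we have $(x_a,y_a) \in \bigcap_{k \in D_a} S_{u,k}$, i.e.\ $(x_a,y_a)$ simultaneously solves the LDE of every subscript $k \in D_a$ for the $u$-th access pair.

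Next I would assemble these per-loop facts into a statement about a single array reference. Fix a subscript position $v \in [n]$. By the partition property \ref{eqn_1}, $v$ lies in exactly one $D_a$, and for that $a$ we have just shown $(x_a,y_a) \in S_{u,v}$ — meaning the two subscript expressions in the $u$-th access pair, evaluated at the $a$-th coordinates $x_a$ and $y_a$ respectively (with all other loop indices that do not touch subscript $v$ being irrelevant), yield the same index value in dimension $v$. Since this holds for every $v \in [n]$, the two array accesses of the $u$-th pair, evaluated at iterations $(x_1,\dots,x_l)$ and $(y_1,\dots,y_l)$, reference the same element of the common array $A$ in all $n$ dimensions. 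As at least one access in each pair is a write (by the hypothesis on $L_2$), this aliasing is a genuine data dependence, so $(x_1,\dots,x_l)$ and $(y_1,\dots,y_l)$ are dependent, which is the conclusion.

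The main obstacle I expect is the bookkeeping around the ``otherwise'' ($D_a = \phi$) branch of \ref{eqn_2}: here $S_{u,a}^{1} = R_a$ rather than an intersection of solution sets, so the pair $(x_a,y_a)$ carries no aliasing information and one must argue carefully that the value of the access expression in every subscript $v$ is determined entirely by the coordinates $\{x_a : v \in D_a\}$, which the partition \ref{eqn_1} guarantees are unconstrained-free of the empty-$D_a$ loops. In other words, the subtlety is to make precise that ``the access expression for subscript $v$ depends only on loop indices $i_a$ with $v \in D_a$,'' which is exactly the separability assumption; once that is stated cleanly, evaluating one access at $x$ and the other at $y$ produces matching indices dimension-by-dimension and the rest is immediate. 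A secondary minor point is that $T$ being a union of products means an element of $T$ need only lie in \emph{one} $T_u$, so the lemma claims dependence via that one LDE pair and nothing stronger — which is why the statement says ``atleast one pair of dependent iterations'' rather than a characterization.
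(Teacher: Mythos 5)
Your proposal is correct and follows essentially the same route as the paper's proof: unwind membership in $T$ via equations \ref{eqn_4} and \ref{eqn_3} to get some $u\in[m]$ with $(x_a,y_a)\in S_{u,a}^{1}$ for every $a$, then use \ref{eqn_2} and the partition \ref{eqn_1} to conclude the $u$-th access pair aliases in every subscript, hence the iterations are dependent. The paper merely casts this as a contradiction argument and is terser about the $D_a=\phi$ case, the separability bookkeeping, and the write condition, all of which you spell out explicitly; the content is the same.
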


\begin{proof}
 Proof by contradiction. Assume there exists an element $E = \prod_{i=1}^{l} \{(x_i,y_i)\}$ and
 $E \in T$, but iterations $(x_1,x_2,...,x_l)$ and $(y_1,y_2,...,y_l)$ are not dependent.
 
 Equation \ref{eqn_4} implies \ref{eqn_6} and equation \ref{eqn_6} and \ref{eqn_3} imply \ref{eqn_7} .
 
 \begin{eqnarray}
  \label{eqn_6}
  \exists q \in [m], E \in T_q
  \\
  \label{eqn_7}
  (\exists q \in [m], \forall a \in [l], (x_a,y_a) \in S_{q,a}^{1} \nonumber
  \\
  or \nonumber
  \\
  \exists q \in [m], \forall a \in [r], (y_a,x_a) \in S_{q,a}^{1})
 \end{eqnarray}

 Equation \ref{eqn_6} and \ref{eqn_3} imply $\forall a \in [l]$, $(x_a,y_a)$ or $(y_a,x_a)$ satisfy all the 
 lde's present in $D_a$ simultaneously. If $D_a = \phi$ then $i_a$ does not appear in access expression of any subscript.
 Hence the two iterations are accessing atleast one common memory location and are dependent, but this is a 
 contradiction. Hence the proof.
\end{proof}

\begin{thm}
\label{theorem_9}
 Precise dependence information for $L_2$ captures all the parallelism 
 present at iteration level granularity.

\end{thm}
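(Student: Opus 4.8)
The plan is to derive Theorem~\ref{theorem_9} as an immediate consequence of Lemmas~\ref{lemma_10} and~\ref{lemma_11}, in exact analogy with the single-loop argument used for Theorem~\ref{PD_thm_single_loop}. First I would make precise what ``captures all parallelism at iteration level granularity'' means: two iteration vectors of $L_2$ may be executed concurrently precisely when there is no chain of data dependences linking them, so a scheme ``captures all parallelism'' iff the dependence relation it records coincides exactly with the true iteration-level dependence relation --- it must neither miss a genuine dependence (which would make a parallel schedule unsafe) nor invent a spurious one (which would needlessly serialize independent iterations).

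With that definition in hand, the proof is a two-sided containment. Soundness: by Lemma~\ref{lemma_11}, every element $\prod_{i=1}^{l}\{(x_i,y_i)\}\in T$ encodes a genuinely dependent pair of iterations $(x_1,\dots,x_l)$ and $(y_1,\dots,y_l)$, so every edge placed in the dependence graph built from $T$ is a true dependence. Completeness: by Lemma~\ref{lemma_10}, any two dependent iterations of $L_2$ are represented by an element of $T$ (in one of the two source/sink orderings, with the coordinates on dimensions $a$ where $D_a=\phi$ ranging freely over $R_a\times R_a$), so no true dependence is omitted. Hence the edge set derived from $T$ is exactly the true iteration-level dependence relation of $L_2$.

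From this equality the theorem follows structurally. The connected components of the undirected graph on $L_2$'s iteration space with edge set $T$ are then exactly the maximal sets of mutually (transitively) dependent iterations: iterations in distinct components share no dependence and may run in parallel without synchronization, while iterations inside one component admit a valid sequential execution order given by any topological sort of the induced DAG. Thus the partition into components exposes every independent pair (completeness implies nothing that could be parallel is forced to be sequential) and introduces no false ordering (soundness implies everything forced to be sequential truly is dependent), which is exactly the claim.

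The only real obstacle is bookkeeping rather than mathematics, and it has already been handled inside Lemmas~\ref{lemma_10} and~\ref{lemma_11}: one must check that the product encoding in equations~\ref{eqn_2}--\ref{eqn_4} is faithful --- that separability is what licenses solving the LDE on each subscript independently and recombining the per-dimension solution pairs coordinatewise via $T_u=\prod_{a\in[l]} S_{u,a}^{1}$, and that the ``$D_a=\phi$'' case (where $i_a$ influences no subscript and hence imposes no constraint) is correctly absorbed by substituting $R_a\times R_a$. Granting those two lemmas, Theorem~\ref{theorem_9} is a one-line corollary: the dependence information recorded by $T$ is simultaneously sound and complete, so it captures all and only the true dependences, hence all the iteration-level parallelism of $L_2$.
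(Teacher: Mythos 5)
Your proposal matches the paper's argument: Theorem~\ref{theorem_9} is proved there exactly as you do it, as an immediate consequence of Lemma~\ref{lemma_10} (completeness: every dependent pair appears in $T$) and Lemma~\ref{lemma_11} (soundness: every element of $T$ is a genuine dependence). Your added discussion of connected components and the faithfulness of the product encoding is consistent elaboration, not a different route.
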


\begin{proof}
 From lemma \ref{lemma_10} and \ref{lemma_11}.
\end{proof}

\section{Nested loops and multiple multi-dimension arrays}

Consider a nested loop $L_3$ having \emph{l} number of loops and \emph{N} be the number of common arrays 
having $n_1,n_2,...,n_N$ dimensions
respectively and let $m_1,m_2,...,m_N$ be the number of pairs of statements accessing the common arrays respectively.
Let $P_1,P_2,...,P_N$ be the dependency information as computed by equation \ref{eqn_4} by considering one array at a time.

Let $P = \cup_{i=1}^{N} P_i $.

\begin{thm}
 Precise dependence \emph{P} for $L_3$ captures all the parallelism present.
\end{thm}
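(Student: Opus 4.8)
The plan is to reduce the multi-array statement to the single-array case already handled by Theorem~\ref{theorem_9}, and then show that the union $P=\bigcup_{i=1}^{N}P_i$ preserves both directions of the correspondence between elements of the dependency information and dependent iteration pairs. Concretely, I would prove two inclusions, mirroring Lemmas~\ref{lemma_10} and~\ref{lemma_11}: first, that every dependent pair of iterations in $L_3$ is represented in $P$; second, that every element of $P$ represents at least one dependent pair in $L_3$. Given these two, the theorem follows exactly as Theorem~\ref{theorem_9} followed from its two lemmas.

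For the forward direction, suppose iterations $(x_1,\dots,x_l)$ and $(y_1,\dots,y_l)$ are dependent in $L_3$. By definition of dependence, they access a common memory location of at least one of the $N$ common arrays; say array $A_i$. Then, restricting attention to the statements and access expressions touching $A_i$, the two iterations are dependent \emph{in the sub-loop $L_3$ viewed as $L_2$ over the single array $A_i$}. Applying Lemma~\ref{lemma_10} to that instance, the corresponding iteration-pair product lies in $P_i$, hence in $P$. For the reverse direction, take an element $E\in P$. By definition of the union there is an index $i$ with $E\in P_i$; Lemma~\ref{lemma_11} applied to the single-array sub-problem for $A_i$ then yields that the iterations encoded by $E$ access a common location of $A_i$, so they are dependent in $L_3$. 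Combining, $P$ captures exactly the set of dependent iteration pairs, and by the argument of Theorem~\ref{PD_thm_single_loop}/\ref{theorem_9} this is equivalent to capturing all iteration-level parallelism.

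The main obstacle I anticipate is making the ``restrict to one array'' step fully rigorous: one has to argue that the dependency information $P_i$ computed by Equation~\ref{eqn_4} for array $A_i$ alone is \emph{identical} to what would be computed if $A_i$ were the only array in the loop --- i.e.\ that the presence of the other arrays does not change the LDEs, the subscript partition of Equation~\ref{eqn_1}, or the products $T_u$ for $A_i$. This is true because separability localizes each subscript to a single loop index and each LDE is built from one pair of accesses to one array, so the $N$ sub-problems are genuinely independent; but it does rely on the separability hypothesis and on the fact that arrays read but never written contribute no dependences. A secondary subtlety is the direction/orientation of the pair (source vs.\ sink): as in Lemmas~\ref{lemma_10} and~\ref{lemma_11}, the statement must be read up to swapping $(x_a,y_a)\leftrightarrow(y_a,x_a)$, and when $D_a=\phi$ the component ranges freely over $R_a$; I would carry these caveats through verbatim from the single-array lemmas rather than re-deriving them.

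Once those two inclusions are in place the proof is a one-line appeal to the two lemmas together with Theorem~\ref{theorem_9}, so the bulk of the work --- and essentially the only place where something could go wrong --- is justifying that the per-array decomposition $P=\bigcup_i P_i$ loses nothing and adds nothing.
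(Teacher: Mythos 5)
Your proposal follows essentially the same route as the paper: decompose the dependence information per array, invoke the single-array result (Theorem~\ref{theorem_9}, via Lemmas~\ref{lemma_10} and~\ref{lemma_11}) for each $P_i$, and conclude from the fact that any dependence must arise through some common array that the union $P=\bigcup_i P_i$ captures exactly the dependent pairs. Your write-up is more explicit about the reverse inclusion and the separability caveat than the paper's brief argument, but the underlying approach is identical, so it is correct.
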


\begin{proof}
 $P_1,P_2,...,P_N$ captures all the iteration pairs which cause dependence in $1_{st},2^{nd},...,n^{th}$ array 
 respectively(from theorem \ref{theorem_9}). 
 Dependencies in loop arise because of dependencies in accessing atleast one common array. Hence \emph{P}
 captures all the dependencies present in the loop.
\end{proof}

\section{Generation of the connected component from seed during run-time}

A seed is a representative iteration of a CC. This section gives an algorithm for generating CC from its seed with its proof of correctness.

\subsection{Algorithm for generating iteration schedule}

Function: Computes a dependency preserving schedule of a connected component.

Input: A seed (belonging to the CC, whose schedule is being computed)
and solutions to the LDE's in parametric form.

Algorithm:

\begin{algorithm}
 \label{schedule_algorithm}
 \caption{Algorithm for generating iteration schedule}
\begin{algorithmic}

\State Add seed to min heap (referred to as heap in this algorithm).

\While{heap is not empty}
  \State remove minimum element \emph{e}
  \If{\emph{e} is not visited}
    \State Discover dependencies of \emph{e} and add 
	\State them to heap
    \If{there are no dependencies for \emph{e}}
      \State Add \emph{e} to schedule
    \Else
      \State Mark \emph{e} as visited 
      \State Add $e$ to heap
    \EndIf
  \Else
    \If { \emph{e} was not added to schedule}
      \State  Add \emph{e} to schedule
    \EndIf
  \EndIf  
\EndWhile

\end{algorithmic}
\end{algorithm}

\subsection{Correctness Proof}
Dependencies of a node are discovered by substituting the value of node in the solution of the LDE's. 
All nodes on which the given node is dependent on and all the nodes which depend on the given node are 
discovered (by theorem \ref{PD_thm_single_loop}).

Let $e$ be the minimum value node taken out of min heap, let $D_l$ be the set of nodes on which $e$
depends (i.e nodes in $D_l$ should be executed before $e$ and $\forall x \in D_l, x < e \ \&\ ((x,e) \in T \ or\  
(e,x) \in T )$), \emph{(T is constructed in equation \ref{eqn_4})} and $D_h$ be the set of nodes which
depend on $e$ (i.e nodes in $D_h$ should be executed after $e$ and $\forall x \in D_h, x > e \ \&\ 
((x,e) \in T \ or\  (e,x) \in T )$).

Since there are no cyclic dependencies, there is atleast one node for which $|D_l|=0$.

\begin{enumerate}
 \item If $|D_l| = 0 $ for $e$ then it can be executed and $D_h$ is added to heap.  
 
 \item If $|D_l| \neq 0 $ for $e$ and $e$ is not marked as visited then elements of $D_l$ and $D_h$ are added to heap
       along with $e$ marked as visited.Hence all the dependencies of $e$ have been added to heap.
 
 \item If $e$ is marked as visited then it can be executed. Since $e$ has been visited, nodes in $D_l$ of
       $e$ were added to heap previously. As $e$ is the minimum value node on heap now,
       this implies that all elements in $D_l$ of $e$ have been executed.
\end{enumerate}

Since the above mentioned process is repeated till heap is empty, all the nodes in the connected component
are scheduled/executed in dependency preserving order.

\subsection{Time and space complexity}
Since every node is examined atmost twice (i.e if $D_l \neq 0 $ for some node $e$ then $e$ is examined for
first time when $e$ is minimum value node on heap and $e$ is NOT marked as visited
and examined for second time to when $e$ is the minimum value
node on heap and it marked as visited). Maximum number of steps taken by the algorithm is atmost
$2n$, where $n$ is number of nodes in the connected component. Hence algorithm runs in $O(n)$ time.

As heap contains the nodes in the connected component, in the worst case, all but one of the nodes in the 
connected component will be on heap. Maximum space consumed by the algorithm is $n-1$ units.
Hence algorithm runs in $O(n)$ space.

\section{Iteration schedule for nested loops}
Consider a nested $n$ loop structure, let V contain the loops whose induction variables are used in array access expressions
and W contain the loops whose induction variables do not appear in any array access expression, 
$V \cup W = [n]$ and $V \cap W = \phi$.
Let loops be numbered as $l_1,l_2,...,l_n$ starting from outermost to innermost loop. 
Let $\forall i \in V, Seed_i$ be the set containing seeds to generate connected components of the partition arising out of all the LDEs involving $i^{th}$ loop induction variable.
We present Algorithm 3 for scheduling nested loops. It uses Algorithm 2 to schedule individual loops. Sequential nested loops will be executed by a group of synchronization free threads parallely. 

\begin{algorithm}
  \label{schedule_algorithm_nested_loop}
  \caption{Algorithm for generating iteration schedule for nested loops}
 \begin{algorithmic}
 \Procedure{schedule\_main}{}
  \For {i=1 to n} 
    \If { $l_i \in V$}
      \State Create $|Seed_i|$ number of threads.
      \State Do a one-to-one mapping 
	  \State of threads to seeds in $Seed_i$.
      \State Each of the created thread runs 
	  \State Algorithm 2 to generate a
	  \State schedule for running $i^{th}$ loop.
      \State Each thread calls 
	  \State SCHEDULE\_SUB(i+1).       
      \State exit
    \Else 
      \State Keep the $i^{th}$ loop sequential.
      \State Schedule consists of iterations in 
      \State range of $i^{th}$ loop.
    \EndIf
  \EndFor
  \State Execute body of loop. \footnote{If the loop is fully parallel and no loop induction variable's occur in array access expression}
  \EndProcedure

  \Procedure{schedule\_sub}{k}
  \For {i=k to n} 
    \If { $l_i \in V$}
      \State Create $|R_i|$ number of threads. \footnote{$R_i$ is the range of $i^{th}$ loop}
      \State Do a one-to-one mapping of 
      \State threads to values in $R_i$.                
    \Else 
      \State Keep the $i^{th}$ loop sequential.
      \State Schedule consists of iterations in 
      \State range of $i^{th}$ loop.
    \EndIf
  \EndFor
  \State Execute body of loop.
  \EndProcedure
  
 \end{algorithmic}

\end{algorithm}

\section{Experimental Results}
The analysis and algorithms developed and presented in this paper show how partitioning of iteration space can be done in case of $VD^3$ and multiple loops with multiple LDEs. We have considered 2 variable LDEs with separable loops. In the presence of multiple LDEs and the ensuing interactions between their individual partitions the generation of the overall partition of the system is difficult. While our formulation successfully captures the overall partition, the experimentation explores the extent of existing parallelism.
\\ \\
The experiments were parametrized by: i) Loop range ii) number of LDEs iii) Randomized coefficients for $VD_3$  and iv) number of experiments carried out. The loop range was varied from [ $-2$ to $2$,..., $-2^{15}$ to $2^{15}$], [1..90] LDEs and up to 100 experiments. For each experiment the partition was evaluated in terms of minimum, maximum and average number of CCs, which are measures of exploitable parallelism.
\begin{itemize}
\item Table \ref{table1lde} shows that the extent of parallelism on the average is atleast 95\% across various ranges.
Thus, $VD^{3}$ for single LDE has significant exploitable parallelism even for large loop ranges.
\begin{table}
\begin{center}
	\begin{tabular}{| p{0.9cm}| p{1.3cm}| p{1.1cm} |p{2.1cm}|}
	\hline
	Range & Parallels & CCs & Independents \\ \hline
	5 & 5 & 0 & 5 \\ \hline
	9 & 9 & 0 & 9 \\ \hline
	17 & 16.97 & 0.03 & 16.94 \\ \hline
	33 & 32.93 & 0.07 & 32.86 \\ \hline
	65 & 64.81 & 0.18 & 64.63 \\ \hline
	129 & 128.44 & 0.54 & 127.9 \\ \hline
	257 & 255.27 & 1.67 & 253.6 \\ \hline
	513 & 508.42 & 4.35 & 504.07 \\ \hline
	1025 & 1012.16 & 12.32 & 999.84 \\ \hline
	2049 & 2016.82 & 30.92 & 1985.9 \\ \hline
	4097 & 4021.72 & 72.44 & 3949.28 \\ \hline
	8193 & 8026.04 & 161.06 & 7864.98 \\ \hline
	16385 & 16036.87 & 335.95 & 15700.92 \\ \hline
	32769 & 32058.47 & 685.59 & 31372.88 \\ \hline
	65537 & 64101.76 & 1384.98 & 62716.78 \\ \hline
 	\end{tabular}
	\caption{Average exploitable parallelism for single LDE}
	\label {table1lde}
\end{center}
\end{table}
\item Table \ref{tablemanylde} shows average parallelism when the number of LDEs are increased up to 90. The number of parallel components reduce considerably when the number of LDEs increase. This is a consequence of ineraction between the partitions of the individual LDEs. Figure \ref{paragraph} shows the range of overall exploitable parallelism for 1,30 and 90 LDEs. 
The table  \ref{tablemanylde} shows average behaviour and graph shows maximum parallelism. It can be seen that as the as the number of LDEs increse the maximun paralleism decreses. There is not enough differnece in the minimum parallelism is 
\begin{table}
\begin{center}
	\begin{tabular}{| p{0.9cm}| p{1.3cm}| p{1.3cm} |p{1.3cm}|p{1.1cm}|}\hline
	Range & 2LDE & 5LDEs & 30LDEs & 90LDEs \\ \hline
	5 & 4.97 & 4.93 & 4.61 & 3.99 \\ \hline
	9 & 8.92 & 8.89 & 8.16 & 6.78 \\ \hline
	17 & 16.82 & 16.69 & 15.11 & 11.74 \\ \hline
	33 & 32.59 & 32.39 & 28.57 & 20.66 \\ \hline
	65 & 64.15 & 63.28 & 54.06 & 34.68 \\ \hline
	129 & 127.11 & 124.20 & 99.11 & 51.44 \\ \hline
	257 & 252.79 & 244.76 & 176.12 & 68.40 \\ \hline
	513 & 501.27 & 481.27 & 308.30 & 87.65 \\ \hline
	1025 & 993.81 & 945.15 & 525.85 & 114.74 \\ \hline
	2049 & 1972.04 & 1852.66 & 897.32 & 150.58 \\ \hline
	4097 & 3920.22 & 3646.23 & 1557.49 & 211.95 \\ \hline
	8193 & 7810.89 & 7224.94 & 2853.74 & 346.77 \\ \hline
	16385 & 15599.41 & 14388.60 & 5488.09 & 628.31 \\ \hline
	32769 & 31176.54 & 28714.82 & 10722.84 & 1159.05 \\ \hline
	65537 & 62330.68 & 51724.69 & 21179.73 & 2199.60 \\ \hline
\end{tabular}
	\caption{Average exploitable parallelism for varying No. of LDEs}
	\label {tablemanylde}
\end{center}
\end{table}

\begin{figure}[!htb]
\begin{center}
\includegraphics[scale = 0.6]{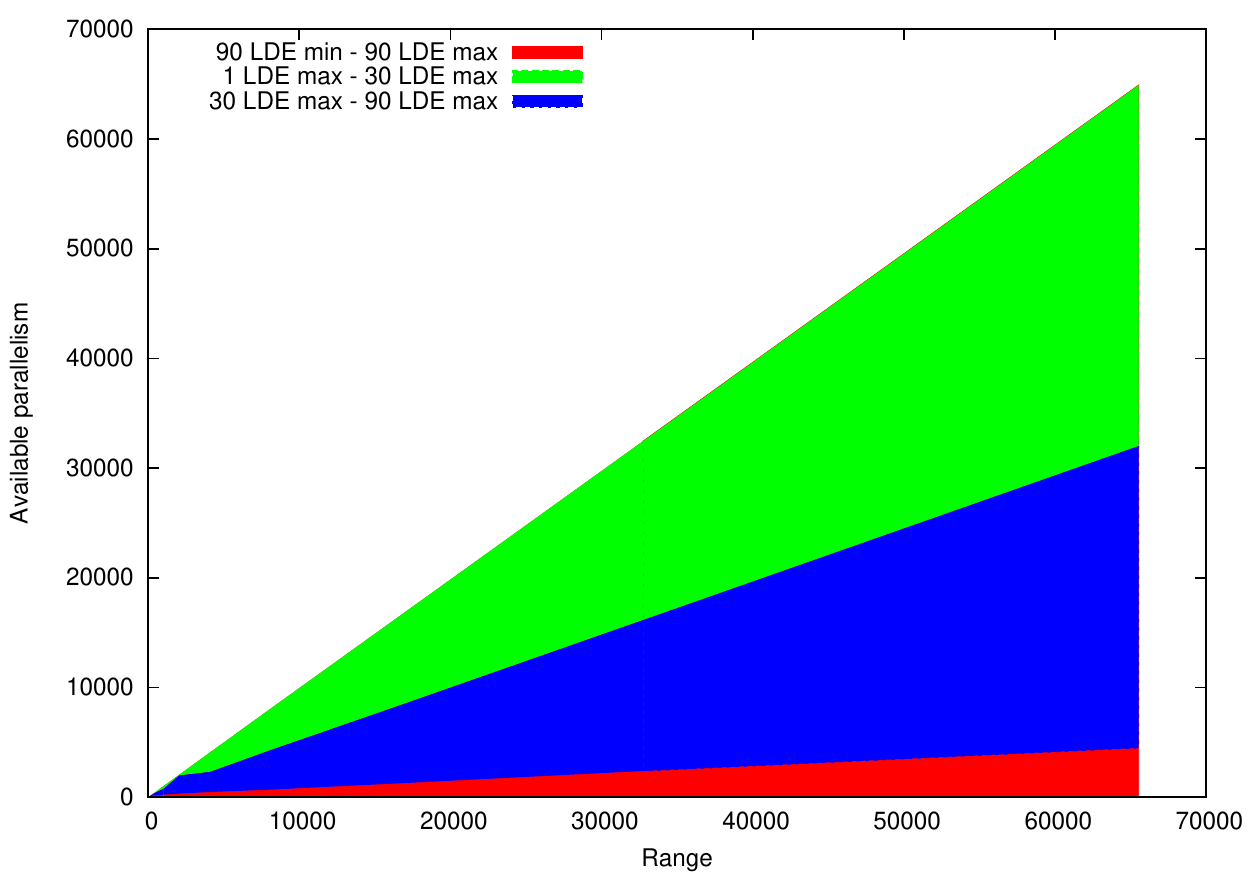}
\caption{Overall Exploitable parallelism}
\label{paragraph}
\end{center}
\end{figure}

\item Results show that exploitable parallelism decreases for increasing number of loops. It was observed that a large number of dependences cross loop independent iterations ($\alpha$s) in solutions of LDEs. We computed a representative iteration $\alpha$r using heuristic based on set of loop independent iterations ($\alpha$s ). The range was split into [LB,$\alpha$r] and [$\alpha$r+1,UB]. The partition 
for all the LDEs reported in table \ref{tablemanylde} have been recreated using the heuristic. This is reported in table \ref{tablealphacut} it shows a significant improvement in exploitable parallelism if the range is split into [LB , $\alpha_{h}$] and [$\alpha_{h} + 1$ , UB]. The improvement ratio increases to approximately 20:1 as the number of LDEs and range increase. Figure \ref{alphagraph} shows significant improvement in maximum parallelism for 1,30 and 90 LDEs after splitting the range into 2 ranges. Comparing Figure \ref{alphagraph} and Figure \ref{paragraph} it can be observed that the effective maximum parallelism shows considerable improvement. To summarize the heuristic used by us to split the iteration ranges seems to profitable.
\end{itemize}
\begin{table}
\begin{center}
	\begin{tabular}{| p{0.9cm}| p{1.3cm}| p{1.3cm}| p{1.1cm}| p{1.3cm}|} \hline
     & \multicolumn{2}{c|}{ 30LDEs } & \multicolumn{2}{c|}{ 90LDEs} \\ \hline
	Range & Avg & split-Avg & Avg & split-Avg \\ \hline
	5 & 4.61 & 4.68 & 3.99 & 4.15 \\ \hline
	9 & 8.16 & 8.49 & 6.78 & 6.93 \\ \hline
	17 & 15.11 & 15.57 & 11.74 & 12.23 \\ \hline
	33 & 28.57 & 29.29 & 20.66 & 21.44 \\ \hline
	65 & 54.06 & 56.10 & 34.68 & 37.73 \\ \hline
	129 & 99.11 & 108.25 & 51.44 & 70.25 \\ \hline
	257 & 176.12 & 217.2 & 68.40 & 144.19 \\ \hline
	513 & 308.30 & 439.07 & 87.65 & 318.07 \\ \hline
	1025 & 525.85 & 880.85 & 114.74 & 660.92 \\ \hline
	2049 & 897.32 & 1754.99 & 150.58 & 1330.00 \\ \hline
	4097 & 1557.49 & 3514.13 & 211.95 & 2651.60 \\ \hline
	8193 & 2853.74 & 7065.49 & 346.77 & 5410.43 \\ \hline
	16385 & 5488.09 & 14264.31 & 628.31 & 11192.73 \\ \hline
	32769 & 10722.84 & 28702.30 & 1159.05 & 22849.37 \\ \hline
	65537 & 21179.73 & 57578.68 & 2199.60 & 46151.75 \\ \hline
\end{tabular}
	\caption{Improved Average exploitable parallelism after range splitting}
	\label {tablealphacut}
\end{center}
\end{table}

\begin{figure}[!htb]
\begin{center}
\includegraphics[scale = 0.6]{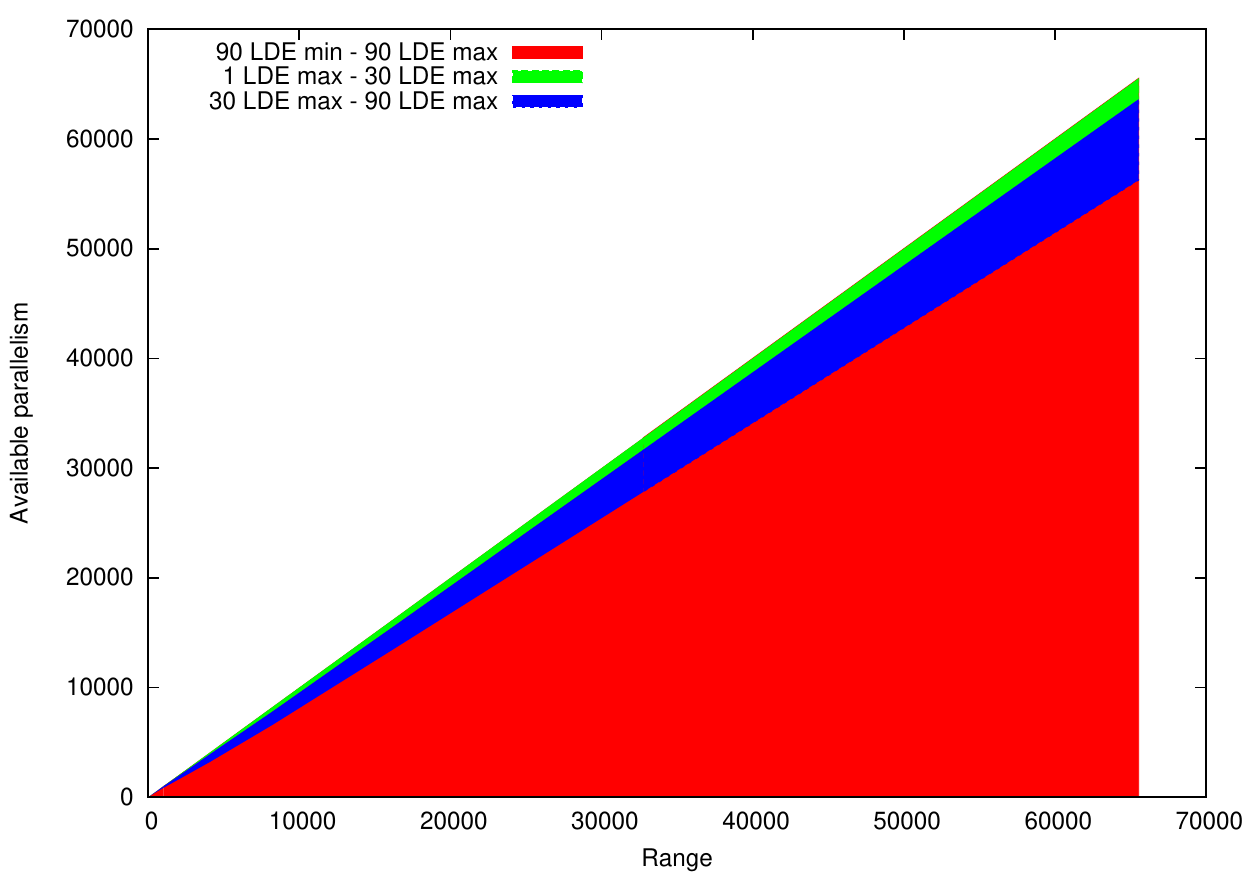}
\caption{Exploitable parallelism after cut}
\label{alphagraph}
\end{center}
\end{figure}

\section {Conclusion}

Capturing dependence in case of loops with $VD^{3}$ is presented in this paper. Our approach uses a two variable LDE and its parametric solutions to analyze and mathematically formulate Precise Dependence, the inherent dependence in the loops. Theoretical methods to compute precise dependence are presented. The result is a partition of the iteration space, a set of parallel CCs. We computed bounds on the number and size of CCs. The partition is represented in a reduced form by a set of seeds. Algorithms to use this set for generation of the components dynamically in a demand driven manner which can lead to schedules are presented.
As we could not find loops with large number of LDEs in practice randomly created multiple LDEs were used for our experimentation. Results show existence of reasonable parallelism which reduces as the number of LDEs increase. To improve results we have formed an alternate partition using a heuristic. This heuristic shows non trivial improvement in parallelism. Further method to obtain synchronization free parallelism is presented at iteration level granularity using efficient algorithms.  
\\ \\
While computing precise dependence in the most general case is NP-complete, most of the array access patterns found in practice are amenable to our approach. Our handling of exploitable parallelism in $VD_3$ and multiple loops expands the domain of applications that are parallelizable today. Compile time analysis for precise dependence opens opportunity of automatic generation of parallel threads suited for multi-core configuration. Even heavily connected dependence graphs show considerable parallelism which can be exploited by a well designed scheduling algorithm.
\\ \\
Some of the possible future directions are: characterizing class of loops for which precise dependence can be computed. Integrating precise dependence analysis, seed generation and scheduling for automatic generation of parallel code.


\end{document}